\theoremstyle{definition} 
\newtheorem {theorem} {Theorem}
\newtheorem {lemma} {Lemma}
\newcommand{\fullversion}[2]{\ifthenelse{\boolean{fullversionflag}}{{#1}}{{#2}}}
\newcommand{\num}{\#}
\newcommand{\kb}[1]{\left[#1\right]}
\newcommand{\al}{\mathcal{A}}
\newcommand{\trd}[1]{\left|\left| #1 \right| \right|}
\newcommand{\st}{\text{ } : \text{ }}
\newcommand{\Hmin}{H_\infty}
\newcommand{\leakEC}{\texttt{leak}_{EC}}
\newcommand{\up}[1]{^{({#1})}}
\newcommand{\bit}{\mathcal{B}}
\newcommand{\ghz}{g}
\begin{document}
\title{Quantum Conference Key Agreement with Classical Advantage Distillation}

% %%% Single author, or several authors with same affiliation:
% \author{%
%  \IEEEauthorblockN{Author 1 and Author 2}
% \IEEEauthorblockA{Department of Statistics and Data Science\\
%                    University 1\\
 %                   City 1\\
  %                  Email: author1@university1.edu}% }

%%% Several authors with up to three affiliations:
\author{%
  \IEEEauthorblockN{Walter O. Krawec}
  \IEEEauthorblockA{School of Computing,
                    University of Connecticut\\
                    Storrs, CT, USA.\\
                    Email: walter.krawec@uconn.edu}
}

\maketitle

%%%%%%
%% Abstract: 
%% If your paper is eligible for the student paper award, please add
%% the comment "THIS PAPER IS ELIGIBLE FOR THE STUDENT PAPER
%% AWARD." as a first line in the abstract. 
%% For the final version of the accepted paper, please do not forget
%% to remove this comment!
%%

\begin{abstract}
In this work, we prove security of a quantum conference key agreement (QCKA) protocol augmented with a classical advantage distillation (CAD) protocol.  We derive a proof of security, in the finite key setting, that is able to bound the secure key rate for any general, coherent, attack.  We evaluate the performance of the system, showing our result can improve the noise tolerance of the protocol in some, but not all, scenarios.
\end{abstract}

\section{Introduction}%\label{sec:introduction:}

Quantum key distribution (QKD) allows two parties to establish a shared secret key, secure against computationally unbounded adversaries.  This is a very mature field at this point, with numerous, and fascinating, results both theoretically and experimentally.  The study of QKD security is also vital for other novel quantum cryptographic primitives beyond key distribution.  Moving beyond QKD, other related cryptographic tasks are possible, one of which is group-key distribution, also known as \emph{quantum conference key agreement} (QCKA).  QCKA protocols allow a group of users to establish a joint, shared, secret key.  While parallel instances of point to point QKD may be used in a trivial manner to achieve this, there are several advantages to QCKA protocols \cite{epping2017multi}.  For a general survey, the reader is referred to \cite{murta2020quantum}.

In this work, we take a BB84-style QCKA protocol, originally introduced in \cite{grasselli2018finite}, and consider how classical advantage distillation (CAD) \cite{maurer1993secret} can be applied to improve its performance in high noise scenarios.  CAD protocols work by dividing the raw key into blocks and attempting to establish a smaller, but more correlated, raw key from the original.  While we only consider block sizes of two in this paper, to our knowledge, this is the first proof of security for this QCKA protocol with any form of CAD.  Our proof bounds the quantum min entropy of the system directly, without having to rely on approximate bounds based on von Neumann entropy, which can add significant costs to the final key-rate bound.

We make several contributions in this work.  First, we propose the study of CAD for the QCKA protocol introduced in \cite{grasselli2018finite} and derive a new proof of security for this system in the finite key setting.  Our proof uses the quantum sampling framework of Bouman and Fehr \cite{bouman2010sampling} as a foundation.  Our proof does not require any assumptions on the attack (e.g., it works for general coherent attacks), though we do assume loss-less, single qubit, channels (we do not consider multi-photon sources).  Our proof only handles block sizes of two, leaving alternative block sizes as future work.  However, despite that, our proof methods may be broadly applicable to studying other QKD and QCKA protocols with CAD type methods, without having to rely on attack assumptions, or min entropy approximations based on bounds of von Neumann entropy (as is often considered in CAD-enabled QKD proofs).  Finally, we evaluate our key-rate bound and show when it is better, and when it is worse, than prior work.

%While our result is not always better, there are realistic cases where it does outperform prior work without CAD. We hope that this initial, preliminary, investigation might spur future research in this area, and our proof method may help with such an endeavor.
Quantum cryptography relies on quantum and classical methods, and the careful use of the latter can greatly benefit the protocol.  Despite this, however, proving security of such systems is often a challenge.  We hope our methods may be broadly applicable to other researchers investigating these areas, and that our evaluations show the potential importance of doing so to QCKA.

% subsection{Preliminaries}%\label{sec:introduction:}
\textbf{Preliminaries: }
Let $\al_d = \{0,1,\cdots, d-1\}$.  Given $q \in \al_d^N$ and $c \in \al_d$, let $\num_c(q)$ be the number of times $c$ appears in the word $q$.  We write $wt(q)$ to be the number of times a non-zero character appears in $q$ and $w(q)$ to be the relative Hamming weight of $q$, namely $w(q) = \frac{1}{N}wt(q)$.

Given a word $q\in\al_d^N$, we write $q_i$ to mean the $i$'th character of $q$ (with $1$ being the first character).  Given a subset $t \subset \{1, \cdots, N\}$ of size $|t| = m$, we write $q_t$ to mean the substring of $q$ indexed by $t$, namely $q_t = q_{t_1}q_{t_2}\cdots q_{t_m}$.  We write $q_{-t}$ to mean the substring of $q$ indexed by the complement of $t$.

We denote by $\bit_p = \{0,1\}^p$.  Given $x \in \bit_p^n$, we mean a $pn$-bit string which can be decomposed as $x_1\cdots x_n$, with each $x_i\in\{0,1\}^p$ indexed by $x_{i,1}\cdots x_{i,p}$.

Let $M = \{\ket{m_0}, \cdots, \ket{m_{d-1}}\}$ be some orthonormal basis.  Given $i\in\al_d$, we write $\ket{i}^{M}$ to mean the $i$'th basis vector in $M$, namely $\ket{i}^{M} = \ket{m_i}$.  Given $q \in \al_d^N$, we write $\ket{q}^{M}$ to mean $\ket{m_{q_1}}\ket{m_{q_2}}\cdots\ket{m_{q_N}}$.  If the basis is not specified, (i.e., $\ket{q}$), then we assume the standard computational basis.

One important basis we will work with is the GHZ basis.  Let $x \in \bit_p$ and $y\in\{0,1\}$.  Then the GHZ basis, for a $p+1$ qubit space, is defined to be those states:
$\ket{g^{p+1}(x;y)} = \frac{1}{\sqrt{2}}(\ket{0,x} + (-1)^y\ket{1, \bar{x}}),$
where $\bar{x}$ is the bitwise complement of $x$.  If the context is clear, we will forgo writing the ``$p+1$'' superscript.  Given $x\in\bit_p^n$ and $y\in\{0,1\}^n$, we will write $\ket{g(x;y)}$ to mean $\ket{g(x_1;y_1)}\otimes\cdots\otimes\ket{g(x_n;y_n)}$ with $x_i$ being the $i$'th block of $p$ bits in $x$ as discussed earlier in this section.

The following lemma will be important later which is easily proven:
\begin{lemma}\label{lemma:introduction:ghz-x}
Let $\ket{g^{p+1}(x;y)}$ be a GHZ state and assume a measurement is made of all qubits in the Hadamard basis resulting in outcome $q\in\bit_{p+1}$.  Then,  $y = q_1\oplus \cdots \oplus q_{p+1}$.
\end{lemma}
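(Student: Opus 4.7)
The plan is to expand $\ket{g^{p+1}(x;y)}$ explicitly in the Hadamard basis and read off which outcomes $q$ can occur with nonzero probability. First I would recall the single-qubit identity $\ket{b} = \frac{1}{\sqrt{2}}\sum_{c\in\{0,1\}}(-1)^{bc}\ket{c}^H$ (with $H$ denoting the Hadamard basis) and its tensor extension $\ket{b} = 2^{-(p+1)/2}\sum_{q\in\{0,1\}^{p+1}}(-1)^{b\cdot q}\ket{q}^H$ for $b,q\in\{0,1\}^{p+1}$.

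Second, I would apply this identity to each of the two components of $\ket{g^{p+1}(x;y)}$. For $\ket{0,x}$, whose leading bit is zero, the resulting phase is $(-1)^{x\cdot q'}$ with $q' = (q_2,\ldots,q_{p+1})$. For $\ket{1,\bar{x}}$ the phase is $(-1)^{q_1 + \bar{x}\cdot q'}$, and since $\bar{x}_i = 1-x_i$ this simplifies modulo $2$ to $(-1)^{q_1\oplus q_2\oplus\cdots\oplus q_{p+1}}(-1)^{x\cdot q'}$. Substituting into the definition of the GHZ state and factoring out $(-1)^{x\cdot q'}$, the amplitude on $\ket{q}^H$ becomes proportional to
\[
1 + (-1)^{y \oplus q_1 \oplus q_2 \oplus \cdots \oplus q_{p+1}},
\]
which is either $2$ or $0$ and is nonzero exactly when $y = q_1\oplus\cdots\oplus q_{p+1}$. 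So every outcome obtained with positive probability satisfies the claimed parity.

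The only obstacle is bookkeeping: tracking which index of $q$ is paired with the first qubit versus the remaining $p$ qubits, and handling the sign flip induced by the bitwise complement on $x$. As a quick sanity check, I would observe that $X^{\otimes(p+1)}$ is a stabilizer of $\ket{g^{p+1}(x;y)}$ with eigenvalue $(-1)^y$, while $X$ acts as $Z$ in the Hadamard basis so that $X^{\otimes(p+1)}\ket{q}^H = (-1)^{q_1+\cdots+q_{p+1}}\ket{q}^H$. Matching eigenvalues immediately forces $y = q_1\oplus\cdots\oplus q_{p+1}$ on any outcome with nonzero amplitude, giving a basis-free confirmation of the explicit expansion.
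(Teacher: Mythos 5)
Your proposal is correct and matches the paper's approach: the paper declares Lemma \ref{lemma:introduction:ghz-x} ``easily proven'' and the same explicit Hadamard-basis expansion, yielding an amplitude proportional to $1+(-1)^{y+q_1+\cdots+q_{p+1}}$, appears inside the proof of Lemma \ref{lemma:introduction:ghz-entropy}. Your stabilizer-eigenvalue check via $X^{\otimes(p+1)}$ is a nice independent confirmation but does not change the substance of the argument.
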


Let $\rho_{AE}$ be some quantum state acting on Hilbert space $\mathcal{H}_A\otimes\mathcal{H}_E$.  Then we write $\rho_A$ to mean the result of tracing out the $E$ portion of $\rho_{AE}$, namely $\rho_A  = tr_E\rho_{AE}$.  This notation extends to multiple systems.  Given a vector $\ket{z}$, we write $\kb{z}$ to mean $\kb{z} = \ket{z}\bra{z}$.  Later, we will also use the projection operator denoted $P(\ket{z}) = \kb{z}$.

We use $H(A)$ to mean the classical Shannon entropy of random variable $A$ and $h(x)$ to mean the binary Shannon entropy of $x\in[0,1]$.  Given a quantum state $\rho_{AE}$ we write $\Hmin(A|E)_\rho$ to mean the conditional quantum min entropy defined in \cite{renner2008security}.  The \emph{smooth quantum min entropy} is denoted $\Hmin^\epsilon(A|E)$ and is defined by \cite{renner2008security} $\Hmin^\epsilon(A|E) = \sup_\sigma\Hmin(A|E)_\sigma,$
where the supremum, above, is over all density operators $\sigma_{AE}$ satisfying $\trd{\rho_{AE}-\sigma_{AE}} \le \epsilon$.  Here, $\trd{A}$ is the trace distance of operator $A$.

Quantum min entropy is a vital resource in cryptography as it measures how many secret uniform random bits may be extracted from a given classical-quantum state $\rho_{AE}$ through privacy amplification.
%% Namely, if Alice were to take the $A$ register and run a privacy amplification protocol on it (which, itself, consists of choosing a random two-universal hash function and running $A$ through it \cite{renner2008security}), then the distance in the output from a secret key is bounded by a function of the min entropy.  In particular, assume the $A$ register is an $N$-bit (classical) string (with the $E$ system a quantum state potentially entangled with it).
Namely, if one chooses a random two-universal $f:\{0,1\}^N \rightarrow \{0,1\}^\ell$, with $\ell \le N$, it holds \cite{renner2008security}:
\begin{equation}\label{eq:introduction:pa}
\trd{\rho_{F(A), EF} - 2^{-\ell}I\otimes\rho_{EF}} \le \sqrt{2^{l - \Hmin^\epsilon(A|E)_\rho}} + 2\epsilon,
\end{equation}
where, above, $F$ represents the register storing the choice of hash function (which we can assume Eve has access to).  %Essentially, above, the larger the min entropy, the closer the ``real'' system is ($\rho_{F(A),EF}$) to the ``ideal'' system (namely, $2^{-\ell}I\otimes\rho_{EF}$ which denotes a uniform random key chosen independenlty of Eve's system).

We will make use of several important properties of quantum min entropy.  First, given a state classical in $Z$, namely $\rho_{AEZ} = \sum_zp(z)\rho_{AE}\up{z}$, then:
\begin{equation}\label{eq:introduction:min-mixed-cl}
\Hmin(A|E)_\rho \ge \Hmin(A|EZ)_\rho \ge \min_z\Hmin(A|E,Z=z)_\rho,
\end{equation}
where $\Hmin(A|E,Z=z)_\rho = \Hmin(A|E)_{\rho\up{z}}$.

% Another nice property of quantum min entropy is that we can compute the entropy of a measurement performed on a superposition state $\ket{\psi}_{AE}$ by computing the min entropy of a suitable mixed state.
The following two lemmas will be important later:
\begin{lemma}\label{lemma:introduction:ent-super}
  (From \cite{bouman2010sampling} based on a proof in \cite{renner2008security}):  Let $\ket{\psi}_{AE} = \sum_{q\in J}\alpha_q\ket{q}^M\ket{E_q}$, with $J \subset \al_d^n$ be a quantum state with respect to some $d$-dimensional basis $M$.  Consider the mixed state $\sigma = \sum_{q\in J}|\alpha_q|^2\kb{q}^M\otimes\kb{E_q}$.  Assume a measurement is performed of the $A$ system using some other basis $N$ resulting in outcome $\rho_{A_NE}$ in the first case (the pure state) or $\sigma_{A_NE}$ in the second (mixed).  Then it holds that:
$    \Hmin(A_N|E)_\rho \ge \Hmin(A_N|E)_\sigma - \log_2|J|.$
%  Namely, the entropy of the pure state can be bounded by the entropy in the mixed state (which is often easy to compute), up to the size of the superposition set $J$.
\end{lemma}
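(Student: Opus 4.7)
\textbf{Proof plan for Lemma \ref{lemma:introduction:ent-super}.} The plan is to establish the operator inequality $\rho_{A_NE} \preceq |J| \cdot \sigma_{A_NE}$ and then invoke the operator characterization of conditional min-entropy to conclude the bound. First, I would expand $\ket{\psi}_{AE}$ in the measurement basis $N$: writing $\ket{q}^M = \sum_n \langle n|q\rangle \ket{n}^N$, one gets $\ket{\psi} = \sum_n \ket{n}^N \otimes \ket{F_n}_E$ where $\ket{F_n} = \sum_{q \in J} \alpha_q \langle n|q\rangle \ket{E_q}$ is an unnormalized vector on $E$. Performing the $N$-basis measurement on $A$ (storing the outcome classically) therefore produces $\rho_{A_NE} = \sum_n \kb{n}_{A_N} \otimes \kb{F_n}_E$, whereas performing the same measurement on the mixture $\sigma$ produces $\sigma_{A_NE} = \sum_n \kb{n}_{A_N} \otimes \sum_{q \in J} |\alpha_q|^2 |\langle n|q\rangle|^2 \kb{E_q}_E$.

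Next, for each fixed $n$, I would show $\kb{F_n} \preceq |J| \sum_{q \in J} |\alpha_q|^2 |\langle n|q\rangle|^2 \kb{E_q}$. This follows by Cauchy--Schwarz applied to any test vector: for any $\ket{x}_E$,
\begin{equation*}
\bra{x}\kb{F_n}\ket{x} = \Bigl|\sum_{q \in J} \alpha_q \langle n|q\rangle \langle x|E_q\rangle \Bigr|^2 \le |J| \sum_{q \in J} |\alpha_q|^2 |\langle n|q\rangle|^2 |\langle x|E_q\rangle|^2,
\end{equation*}
which is exactly $|J|$ times the corresponding matrix element of the right-hand side. Summing the resulting blockwise inequalities over $n$ yields $\rho_{A_NE} \preceq |J| \cdot \sigma_{A_NE}$ as operators on $\mathcal{H}_{A_N} \otimes \mathcal{H}_E$.

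Finally, I would close the argument using the operator formulation of conditional min-entropy: $\Hmin(A_N|E)_\sigma$ equals $-\log_2 \lambda$ for the smallest $\lambda$ such that $\sigma_{A_NE} \preceq \lambda \, I_{A_N} \otimes \tau_E$ for some density operator $\tau_E$. Composing that witness with the just-derived inequality gives $\rho_{A_NE} \preceq |J| \cdot 2^{-\Hmin(A_N|E)_\sigma} \, I_{A_N} \otimes \tau_E$, which by the same characterization implies $\Hmin(A_N|E)_\rho \ge \Hmin(A_N|E)_\sigma - \log_2|J|$, as claimed.

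The main obstacle is really just the Cauchy--Schwarz step, and in particular making sure that the bound is taken in the operator-semidefinite sense (not merely for expectation values on product states); the pointwise-in-$\ket{x}$ argument above handles this cleanly. Nothing in the proof requires the $\ket{E_q}$ to be orthogonal, nor that the measurement basis $N$ be the computational one, so the lemma applies in the generality stated.
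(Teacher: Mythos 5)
Your proof is correct and is essentially the standard argument for this superposition lemma from the cited sources (Bouman--Fehr, building on Renner): the paper itself states the lemma without reproducing a proof, and the canonical proof proceeds exactly as you do, by establishing $\rho_{A_NE} \preceq |J|\,\sigma_{A_NE}$ via Cauchy--Schwarz and then invoking the operator characterization of conditional min-entropy. Your blockwise handling of the measurement outcomes and your remark that orthogonality of the $\ket{E_q}$ is not needed are both accurate; no gaps.
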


%Another useful lemma allows us to bound the quantum min entropy in a state conditioning on a particular measurement outcome, based on the min entropy of a state that is ``close'' to the given one.
\begin{lemma}\label{lemma:introduction:entropy-mixed-cptp}
  (From \cite{krawec2022security}): Let $\rho$ and $\sigma$ be two quantum states with $\frac{1}{2}\trd{\rho-\sigma}\le \epsilon$.  Let $\mathcal{F}$ be a CPTP map such that $\mathcal{F}(\rho) = \sum_xp_x\kb{x}_X\otimes\rho_{AE}\up{x}$ and $\mathcal{F}(\sigma) = \sum_xq_x\kb{x}_X\otimes\sigma_{AE}\up{x}$.  Specifically, $\mathcal{F}$ is a map that measures part of the system leaving a conditional state based on that observation (the conditional state may also have been measured or acted on in some way by $\mathcal{F}$).  Then it holds that:
  \begin{equation}%\label{eq:introduction:}
    Pr\left(\Hmin^{4\epsilon+2\epsilon^{1/3}}(A|E)_{\rho\up{x}} \ge \Hmin(A|E)_{\sigma\up{x}}\right) \ge 1-2\epsilon^{1/3}.
  \end{equation}
  where, above, the probability is over the random system $X$.
\end{lemma}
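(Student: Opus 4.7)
The plan is to exploit three standard properties: (i) contractivity of the trace distance under CPTP maps, (ii) the fact that the trace distance of two classical-quantum states sharing the same classical register decomposes as a sum over the classical outcomes, and (iii) the basic definition of smooth min-entropy, which says that if two density operators are within trace distance $\epsilon'$, then the smooth min-entropy of one (smoothed by $\epsilon'$) lower-bounds the ordinary min-entropy of the other. A Markov-style argument on the classical register will then convert an expectation bound into a high-probability statement.

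First I would note that, by monotonicity of the trace distance under $\mathcal{F}$, we have $\frac{1}{2}\trd{\mathcal{F}(\rho)-\mathcal{F}(\sigma)}\le\epsilon$. Because $\mathcal{F}(\rho)$ and $\mathcal{F}(\sigma)$ are both classical-quantum with the same $X$-register, this trace distance splits as
\begin{equation}
\frac{1}{2}\sum_x\trd{p_x\rho_{AE}\up{x}-q_x\sigma_{AE}\up{x}}\le\epsilon.
\end{equation}
Set $\tilde\delta_x=\trd{p_x\rho\up{x}-q_x\sigma\up{x}}$, so that $\sum_x\tilde\delta_x\le 2\epsilon$. Using the triangle inequality to insert and subtract $p_x\sigma\up{x}$, and the fact that tracing out $AE$ gives $|p_x-q_x|\le\tilde\delta_x$, I obtain the key per-outcome bound
\begin{equation}
p_x\trd{\rho\up{x}-\sigma\up{x}}\le 2\tilde\delta_x.
\end{equation}

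Next I would perform a Markov argument on the random variable $X\sim p_x$. The previous inequality gives
\begin{equation}
\Pr_{x\sim p_x}\left[\trd{\rho\up{x}-\sigma\up{x}}>T\right]\le\frac{1}{T}\sum_x 2\tilde\delta_x\le\frac{4\epsilon}{T}.
\end{equation}
Choosing the threshold $T$ of order $\epsilon^{2/3}$ (and absorbing constants to obtain the $4\epsilon+2\epsilon^{1/3}$ form in the statement) guarantees a failure probability of at most $2\epsilon^{1/3}$. For any good $x$, the conditional state $\sigma\up{x}$ lies in the $T$-trace-distance ball around $\rho\up{x}$, so by the very definition of smooth min-entropy as a supremum over a trace-distance ball, $\Hmin^{T}(A|E)_{\rho\up{x}}\ge\Hmin(A|E)_{\sigma\up{x}}$, which is exactly the claim.

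The only subtle point, and the one I would spend the most care on, is the bookkeeping in step (3): separating the quantum closeness of $\rho\up{x}$ and $\sigma\up{x}$ from the classical closeness $|p_x-q_x|$ cleanly enough that $p_x$ (rather than $\max(p_x,q_x)$ or $q_x$) appears on the left, since we must ultimately take a probability under $p_x$ to match the statement. Once that decomposition is in place, the Markov step and the smooth-min-entropy definition are both immediate, and the proof reduces to selecting the threshold $T$ to balance smoothing parameter against failure probability.
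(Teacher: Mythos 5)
The paper does not actually prove this lemma --- it is imported verbatim from \cite{krawec2022security} with only a citation --- so there is no in-paper proof to compare against; judged on its own, your argument is correct and is essentially the standard route one would expect (and, in spirit, the route of the cited source): contractivity of trace distance under $\mathcal{F}$, block-diagonal decomposition over the classical register $X$, the per-outcome bound $p_x\trd{\rho\up{x}-\sigma\up{x}}\le 2\tilde{\delta}_x$ (your handling of the subtle step is right: $|p_x-q_x|\le\tilde{\delta}_x$ by monotonicity of the trace norm under the partial trace over $AE$, and the triangle inequality through $p_x\sigma\up{x}$ then puts $p_x$, not $q_x$, on the left, which is exactly what the Markov step over $x\sim p_x$ needs), followed by Markov's inequality and the definition of smooth min-entropy as a supremum over the trace-distance ball (which, as the paper defines it, uses the unnormalized trace distance, consistent with your bookkeeping). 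The only point you leave implicit is the phrase ``absorbing constants'': taking $T=2\epsilon^{2/3}$ gives failure probability $4\epsilon/T=2\epsilon^{1/3}$ and smoothing parameter $2\epsilon^{2/3}$, and you should say explicitly that the stated form then follows because smooth min-entropy is nondecreasing in the smoothing parameter and $2\epsilon^{2/3}\le 2\epsilon^{1/3}\le 4\epsilon+2\epsilon^{1/3}$ for $\epsilon\le 1$. With that one sentence added, your proof is complete and in fact establishes a marginally stronger statement (smoothing $2\epsilon^{2/3}$ suffices) than the lemma as quoted.
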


Finally, we can prove the following lemma for GHZ states.  This follows immediately from Lemma \ref{lemma:introduction:ent-super} and basic identities of GHZ states:
\begin{lemma}\label{lemma:introduction:ghz-entropy}
  Let $\ket{\psi} = \sum_{x\in\bit_p^n}\sum_{y\in J}\ket{g^{p+1}(x;y)}\ket{E_{x,y}}$ be some quantum state (the $\ket{E_{x,y}}$ states, here, are sub-normalized) where $J \subset \{0,1\}^n$.  Assume a measurement is made in the $Z$ basis of the first qubit of each of the $n$ GHZ states, while the remaining $p$ qubits are discarded. This process results in a density operator $\rho_{ZE}$.  Then it holds that:
%  \begin{equation}%\label{eq:introduction:}
$    \Hmin(Z|E)_\rho \ge n - \log_2|J|.$
%  \end{equation}
\end{lemma}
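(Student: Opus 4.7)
The plan is to reduce to Lemma \ref{lemma:introduction:ent-super} after a change of basis that turns the GHZ superposition into a product form, isolating the $y$-register (whose support lies in $J$) from the $x$-register.  Specifically, consider the unitary $V$ acting on a single $(p+1)$-qubit block by $V\ket{a,b} = \ket{a,\, b\oplus a^p}$ for $a\in\{0,1\}$ and $b\in\{0,1\}^p$; that is, $V$ is a cascade of CNOTs controlled on the first qubit and targeting each of the remaining $p$ qubits.  A direct computation gives $V\ket{g^{p+1}(x;y)} = \ket{y}^H\ket{x}$, where $\ket{y}^H$ denotes the single-qubit Hadamard-basis state (so $\ket{0}^H=\ket{+}$, $\ket{1}^H=\ket{-}$).

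Applying $V^{\otimes n}$ to $\ket{\psi}$ then yields $V^{\otimes n}\ket{\psi} = \sum_{y\in J}\ket{y}^H_{A_1}\otimes\ket{F_y}_{A_2 E}$, where $A_1$ collects the first qubits of each of the $n$ blocks, $A_2$ the remaining $np$ qubits, and $\ket{F_y} = \sum_{x\in\bit_p^n}\ket{x}_{A_2}\ket{E_{x,y}}$ is subnormalized.  Since $V$ is controlled by the first qubit, it commutes with the $Z$-basis measurement on $A_1$; and since $A_2$ is ultimately discarded, the action of $V$ on $A_2$ cannot affect the final state $\rho_{ZE}$.  Hence the scenario of the lemma is equivalent to measuring $A_1$ of $V^{\otimes n}\ket{\psi}$ in the $Z$ basis and then tracing out $A_2$.

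I then apply Lemma \ref{lemma:introduction:ent-super} to $V^{\otimes n}\ket{\psi}$, taking the $A$-register to be $A_1$ with the Hadamard basis as $M$, the $Z$ basis as $N$, and $A_2 E$ as the side-information register.  The index set is exactly $J\subset\{0,1\}^n$, so the lemma gives $\Hmin(Z|A_2 E)_\rho \ge \Hmin(Z|A_2 E)_\sigma - \log_2|J|$, where $\sigma$ is the corresponding classical-quantum mixture.  For $\sigma$, conditioned on $y$ the $A_1$ register is in the pure Hadamard state $\ket{y}^H$, and a $Z$-basis measurement of that state yields $n$ independent uniform bits regardless of $y$ or the contents of $A_2 E$; hence $\Hmin(Z|A_2 E)_\sigma = n$.

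To finish, I invoke the standard property that tracing out part of Eve's side information cannot decrease min-entropy, i.e.\ $\Hmin(Z|E)_\rho \ge \Hmin(Z|A_2 E)_\rho$, which gives $\Hmin(Z|E)_\rho \ge n - \log_2|J|$ as claimed.  The one subtle step is the first: recognizing that a unitary controlled by $A_1$ and acting on the discarded register $A_2$ may be freely inserted, producing a form whose $A_1$-support is indexed \emph{exactly} by $J$.  This is what makes Lemma \ref{lemma:introduction:ent-super} pay only $\log_2|J|$, rather than the prohibitive $pn + \log_2|J|$ penalty one would incur by naively applying it in the GHZ basis on the full $n(p+1)$-qubit system.
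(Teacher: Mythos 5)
Your proof is correct, and it reaches the same destination as the paper's --- an application of Lemma \ref{lemma:introduction:ent-super} in the Hadamard basis on the first qubits of the blocks, with index set exactly $J$ --- but by a genuinely different technical route. The paper expands each GHZ state in the Hadamard basis of all $p+1$ qubits, observes the parity constraint $c_0 = y\oplus c_1\oplus\cdots\oplus c_p$, and traces out the discarded qubits to obtain a \emph{classical mixture} over $c\in\bit_p^n$ of states whose first-qubit support is $\{y\oplus f(c) \st y\in J\}$; it then applies Lemma \ref{lemma:introduction:ent-super} to each branch and Equation \ref{eq:introduction:min-mixed-cl} to handle the mixture. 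You instead conjugate by the CNOT cascade $V$, which cleanly factors each block as $\ket{y}^H\otimes\ket{x}$, apply Lemma \ref{lemma:introduction:ent-super} once with $A_2E$ as quantum side information, and then discard $A_2$ via monotonicity of min-entropy under partial trace of the conditioning system. Both of your two supporting observations check out: $V$ commutes with the $Z$-measurement on the control register, and $\bra{z}_{A_1}V = \bra{z}_{A_1}U_z$ for a unitary $U_z$ acting only on $A_2$, so the reduced state $\rho_{ZE}$ is unchanged. Your version is arguably a bit cleaner --- it avoids the explicit mixture over $c$ and the appeal to Equation \ref{eq:introduction:min-mixed-cl} --- at the cost of invoking the (standard, but not stated in the paper for quantum side information) data-processing inequality $\Hmin(Z|A_2E)\le\Hmin(Z|E)$. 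Your closing remark about why the naive application of Lemma \ref{lemma:introduction:ent-super} in the GHZ basis would incur a useless $pn+\log_2|J|$ penalty is also accurate and is precisely the point of the maneuver in both proofs.
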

\begin{proof}%\label{pf:introduction:}
  Consider, first, a single GHZ state $\ket{g(x_i;y_i)}$.  Writing this state in the Hadamard basis produces:
%  \begin{align}%
%    \frac{1}{\sqrt{2^{p+2}}}\sum_{c_0,\cdots, c_p\in\{0,1\}}\left((-1)^{c\cdot x} + (-1)^{y+c_0+c\cdot\bar{x}}\right)\ket{c_0,\cdots, c_p}^X.
     %    \end{align}
%  \begin{equation}%\label{eq:introduction:}
$    \frac{1}{\sqrt{2^{p+2}}}\sum_{c_0,\cdots, c_p}(-1)^{c\cdot x}\left(1 + (-1)^{y+c_0+c_1+\cdots c_p}\right)\ket{c_0,\cdots, c_p}^X.$
%  \end{equation}
  where $c\cdot x = c_1x_1 + c_2x_2 + \cdots c_px_p$ (with the addition done modulo two).%  Since $c_ix_i \oplus c_i\bar{x_i} = c_i$, this simplifies to:
  The coefficient of $\ket{c_0,\cdots, c_p}$ is non-zero only when $c_0 = y\oplus c_1\oplus\cdots\oplus c_p$.  Thus, we may write the above as $\ket{g(x_i;y_i)}=$
%  \begin{equation}%\label{eq:introduction:}
$    \frac{1}{\sqrt{2^p}}\sum_{c_1,\cdots,c_p}(-1)^{c\cdot x}\ket{y\oplus c_1\oplus\cdots\oplus c_p, c_1,\cdots,c_p}^X.$
%  \end{equation}
  Now, returning to the original state $\ket{\psi}$, we change basis of all $n$ GHZ states and then trace out those systems which will not be measured.  The resulting state can be written as a mixture, over $c \in \bit_p^n$ of states of the form $\sum_{y\in J}\ket{y\oplus f(c)}\ket{F_{y}\up{c}},$
%  \begin{equation}%\label{eq:introduction:}
%    \rho_{XE} = \sum_{c\in\bit_p^n}p(c)P\left(\sum_{y\in J}\ket{y\oplus f(c)}\ket{F_{y}\up{c}}\right),
%  \end{equation}
  where $f(c)$ is a deterministic function of $c$ and $\ket{F_y\up{c}}$ is a function of the $\ket{E}$ vectors, along with $c$. The result follows from Lemma \ref{lemma:introduction:ent-super} and Equation \ref{eq:introduction:min-mixed-cl}.
\end{proof}

%$ $\newline
%\textbf{Quantum Sampling: }
Our proof will make use of the sampling framework of Bouman and Fehr introduced in \cite{bouman2010sampling}, along with proof techniques we developed for sampling-based entropic uncertainty relations \cite{krawec2019quantum}.  We only briefly cover the sampling framework here, leaving more details to \cite{bouman2010sampling}.

Let $\delta > 0 $ and $N\in\mathbb{N}$ be given, and let $t \subset \{1, \cdots, N\}$.  We define a set of \emph{good words} $\mathcal{G}^t_\delta$ (with respect to subset $t$ and given $\delta$), to be a subset of $\al_d^N$ (or some other alphabet) of words that obey some structure.% For instance:
%\begin{equation}\label{eq:introduction:good-HW}
%\mathcal{G}^t_\delta = \{q \in \{0,1\}^N \st |w(q_t) - w(q_{-t})|\le\delta\}.
%\end{equation}
%Other choices are, of course, possible.
In general, the set $\mathcal{G}^t_\delta$ is induced by fixing a classical sampling strategy on words in $\al_d^N$.  Fix $\delta$, along with a probability distribution $P_T$ over subsets $t \subset\{1, \cdots, N\}$. Then, we can define the so-called \emph{error probability} of the strategy:
%\begin{equation}%\label{eq:introduction:}
$\epsilon^{cl}_\delta = \max_qPr\left(q\not\in\mathcal{G}^t_\delta\right),$
%\end{equation}
where the probability, above, is over the choice of subsets $t$, according to the given distribution $P_T$.  %Essentially, if $\epsilon^{cl}_\delta$ is small, it can be guaranteed that, on average over the subset choice, any word will obey the given sampling strategy (e.g., in the case of Equation \ref{eq:introduction:good-HW}, the Hamming weight of $w(q_{t})$ will be close to $w(q_{-t})$ with high probability over the subset choice).

Now, fix a $d$-dimensional orthonormal basis $\mathcal{M}$ and consider the following subspace:
%\begin{equation}%\label{eq:introduction:}
$\mathcal{G}^t_{\delta}(\mathcal{M}) = \text{span}\left(\ket{q}^{\mathcal{M}} \st q \in \mathcal{G}^t_\delta\right)\otimes\mathcal{H}_E,$
%\end{equation}
where $\mathcal{H}_E$ is some arbitrary ancilla system (perhaps controlled by an adversary system).  Then, the following theorem was proven in \cite{bouman2010sampling}:
\begin{theorem}\label{thm:introduction:sample}
  (From results in \cite{bouman2010sampling}): Let $\ket{\psi}_{AE}$ be a quantum state, where the $A$ register is of dimension $d^N$ and let $\mathcal{M}$ be a $d$-dimensional orthonormal basis.  Let $\delta > 0$ and fix a family of good words $\mathcal{G}^t_\delta$, along with a probability distribution over subsets $P_T$.  Then, there exists a collection of \emph{good states}, $\ket{\phi^t}_{AE}$ such that for every $t$, it holds $\ket{\phi^t}\in \mathcal{G}^t_\delta(\mathcal{M})$ and:
%  \begin{equation}%\label{eq:introduction:}
$    \frac{1}{2}\trd{\sum_tP_T(t)\kb{t}(\kb{\psi}_{AE} - \kb{\phi^t}_{AE})} \le \sqrt{\epsilon_\delta^{cl}}.$
%  \end{equation}
  
\end{theorem}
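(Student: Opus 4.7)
The plan is to define, for each subset $t$, a candidate good state $\ket{\phi^t}$ as the normalized projection of $\ket{\psi}$ onto $\mathcal{G}^t_\delta(\mathcal{M})$, and then bound the classical-quantum trace distance by the classical error probability $\epsilon_\delta^{cl}$ using the pure-state fidelity identity, block-diagonality in the $T$ register, and a single Jensen step.

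Concretely, I would expand $\ket{\psi}_{AE} = \sum_{q\in\al_d^N}\alpha_q\ket{q}^{\mathcal{M}}\ket{E_q}$, let $\Pi^t = \sum_{q\in\mathcal{G}^t_\delta}\kb{q}^{\mathcal{M}}\otimes I_E$ be the orthogonal projector onto $\mathcal{G}^t_\delta(\mathcal{M})$, and set $\ket{\phi^t} = \Pi^t\ket{\psi}/\|\Pi^t\ket{\psi}\|$ (picking any fixed unit vector in the good subspace if this norm vanishes). Writing $\epsilon^t := \|(I-\Pi^t)\ket{\psi}\|^2 = \sum_{q\notin\mathcal{G}^t_\delta}|\alpha_q|^2$, the overlap satisfies $|\braket{\psi|\phi^t}|^2 = 1-\epsilon^t$, so the standard pure-state trace-distance identity gives $\frac{1}{2}\trd{\kb{\psi} - \kb{\phi^t}} = \sqrt{\epsilon^t}$. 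Because the classical states $\kb{t}$ are mutually orthogonal, the full trace norm splits as $\frac{1}{2}\trd{\sum_t P_T(t)\kb{t}\otimes(\kb{\psi} - \kb{\phi^t})} = \sum_t P_T(t)\sqrt{\epsilon^t}$, and Jensen's inequality applied to the concave function $\sqrt{\cdot}$ upgrades this to $\sqrt{\sum_t P_T(t)\epsilon^t}$. Swapping the order of the $q$ and $t$ sums and using $\sum_q|\alpha_q|^2=1$ then yields $\sum_t P_T(t)\epsilon^t = \sum_q|\alpha_q|^2\Pr_{t\sim P_T}(q\notin\mathcal{G}^t_\delta) \le \max_q\Pr_{t\sim P_T}(q\notin\mathcal{G}^t_\delta) = \epsilon_\delta^{cl}$, which chains to the stated bound.

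The main obstacle is conceptual rather than computational: $\mathcal{H}_E$ must remain a genuine spectator throughout, which relies crucially on $\mathcal{G}^t_\delta(\mathcal{M})$ being defined as the product of a fixed subspace on $A$ with all of $\mathcal{H}_E$, so that $\Pi^t$ never constrains Eve's amplitudes $\ket{E_q}$. The only remaining wrinkles are handling the degenerate case $\|\Pi^t\ket{\psi}\|=0$ (where replacing $\ket{\phi^t}$ by an arbitrary unit vector in the good subspace only improves the bound) and verifying that the trace norm of a classical-quantum operator that is block-diagonal in the $T$ register equals the $P_T$-weighted sum of the per-block trace norms; both are standard.
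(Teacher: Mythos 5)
Your proof is correct. Every step checks out: the renormalized projection $\Pi^t\ket{\psi}/\|\Pi^t\ket{\psi}\|$ lies in $\mathcal{G}^t_\delta(\mathcal{M})$ because that subspace is (good subspace on $A$)$\,\otimes\,\mathcal{H}_E$; the overlap computation $|\braket{\psi|\phi^t}|^2 = \|\Pi^t\ket{\psi}\|^2 = 1-\epsilon^t$ and the pure-state identity $\tfrac12\trd{\kb{\psi}-\kb{\phi^t}}=\sqrt{1-|\braket{\psi|\phi^t}|^2}$ are standard; block-diagonality in $T$ lets the trace norm split into the $P_T$-weighted sum; and the Jensen step followed by exchanging the $q$ and $t$ sums gives exactly $\sqrt{\max_q\Pr_t(q\notin\mathcal{G}^t_\delta)}=\sqrt{\epsilon^{cl}_\delta}$. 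The paper itself does not prove this theorem --- it defers entirely to the cited references --- and what you have written is, in essence, the argument from Bouman and Fehr that those references contain, so you have supplied the self-contained proof the paper omits. The one edge case beyond the $\|\Pi^t\ket{\psi}\|=0$ one you already handle is the (pathological, never-occurring) possibility that $\mathcal{G}^t_\delta$ is empty for some $t$, in which case there is no unit vector in the good subspace to fall back on; it is not worth more than a remark since in every instantiation used in the paper the good set contains, e.g., the all-zeros word.
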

\begin{proof}%\label{pf:introduction:}
See \cite{bouman2010sampling}; note that they actually proved a stronger result, however we word the theorem in this way as it becomes immediately applicable to our work.  For a proof that this wording follows immediately from their results, see \cite{yao2022quantum}.
\end{proof}

Before leaving this section, we conclude with the following:
\begin{lemma}\label{lemma:introduction:basic-sample}
  (From \cite{bouman2010sampling}): Let $\mathcal{G}^t_\delta = \{q\in\{0,1\}^N \st |w(q_t) - w(q_{-t})|\le\delta\}$.  Also, let $P_T$ be a distribution on subsets which chooses subsets $t$ of fixed size $m$ uniformly at random for some given $m < N/2$.  Then $\epsilon^{cl}_\delta = \max_{q\in\{0,1\}^N}Pr\left(q\not\in\mathcal{G}^t_\delta\right) \le \epsilon_0 = 2\exp\left(\frac{-\delta^2mN}{N+2}\right).$
%  \begin{equation}%\label{eq:introduction:}

%  \end{equation}
  
\end{lemma}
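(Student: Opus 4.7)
The plan is to reduce the claim to a standard concentration inequality for sampling without replacement. First I would fix an arbitrary word $q \in \{0,1\}^N$ and set $\mu = w(q)$. Since $wt(q_t) + wt(q_{-t}) = wt(q) = N\mu$, dividing by $m$ and $N-m$ respectively gives $m \cdot w(q_t) + (N-m) \cdot w(q_{-t}) = N\mu$, from which a direct calculation yields the key algebraic identity
\[
w(q_t) - w(q_{-t}) = \frac{N}{N-m}\bigl(w(q_t) - \mu\bigr).
\]
Therefore the event $q \notin \mathcal{G}^t_\delta$ is equivalent to $|w(q_t) - \mu| > \delta(N-m)/N$, reducing the problem to bounding the deviation of the empirical mean on the sample from the true mean on the full string.

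Next I would observe that when $t$ is a uniformly random $m$-subset of $\{1,\ldots,N\}$, the count $wt(q_t)$ follows a hypergeometric distribution with parameters $(N, N\mu, m)$, so its mean is $m\mu$. I would then invoke Hoeffding's concentration inequality for sampling without replacement (or a Serfling-sharpened version), which gives a two-sided tail of the form $2\exp(-2m\epsilon^2)$ on $|w(q_t) - \mu| > \epsilon$. Substituting $\epsilon = \delta(N-m)/N$ and applying the hypothesis $m < N/2$ to control the factor $(N-m)/N$ produces a bound of the advertised exponential form in $\delta^2$ and $m$. Taking the maximum over $q$ is then immediate, since the argument is uniform in the choice of $q$ (the final bound does not depend on $\mu$).

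The main obstacle is matching the precise constant $N/(N+2)$ in the exponent. A naive application of Hoeffding's inequality yields the factor $(N-m)^2/N^2$, which is only comparable to $N/(N+2)$ after further crude estimation; recovering the exact expression requires either the Serfling refinement for sampling without replacement or, more directly, the counting argument over $m$-subsets carried out in \cite{bouman2010sampling}. The cleanest route in a self-contained treatment would be to combine the reduction above with their Chernoff-style calculation for the hypergeometric tail; otherwise one simply cites their lemma to obtain the stated constant.
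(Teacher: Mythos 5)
The paper offers no proof of this lemma at all --- it is stated as imported directly from \cite{bouman2010sampling} --- so your proposal is best read as a reconstruction of the cited argument rather than a departure from the paper's route, and as such it is essentially correct. Your reduction is sound: the identity $w(q_t)-w(q_{-t})=\frac{N}{N-m}\bigl(w(q_t)-\mu\bigr)$ checks out, $wt(q_t)$ is indeed hypergeometric with mean $m\mu$, and you are right that plain Hoeffding for sampling without replacement gives exponent $2m\delta^2(N-m)^2/N^2$, which for $m$ near $N/2$ is only about $\delta^2 m/2$ and so falls short of $\delta^2 mN/(N+2)$. The constant-matching issue you flag does close, exactly, under Serfling's refinement $\Pr\bigl(|w(q_t)-\mu|\ge\epsilon\bigr)\le 2\exp\bigl(-2m\epsilon^2/(1-(m-1)/N)\bigr)$: substituting $\epsilon=\delta(N-m)/N$ gives exponent $2m\delta^2(N-m)^2/\bigl(N(N-m+1)\bigr)$, and the required inequality $2(N-m)^2(N+2)\ge N^2(N-m+1)$ holds for all $m\le N/2$ --- writing $u=N-m$, the quadratic $2(N+2)u^2-N^2u-N^2$ has $u=N/2$ as one root and a negative second root, hence is nonnegative for $u\ge N/2$, with equality precisely at $m=N/2$. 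So your outline, completed with Serfling rather than ``further crude estimation,'' yields the stated constant; the only slight misstatement is the implication that extra slack from $m<N/2$ is needed or available, when in fact the bound coincides with the Serfling bound at the boundary $m=N/2$.
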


%%% Local Variables:
%%% mode: latex
%%% TeX-master: "main"
%%% End:

\section{The Protocol}\label{sec:protocol:protocol}

The protocol we consider in this work is the GHZ-based BB84 protocol, first introduced in \cite{grasselli2018finite}, with the addition of a two-block CAD protocol similar to \cite{maurer1993secret}, however extended, in the natural way, to work with multiple parties.  To save space, we only write out the entanglement based version of the protocol.

Let $p+1$ be the number of users; we label the users Alice (who will be the ``leader'' in a way) and the other users Bob$^1$ through Bob$^p$.  Let $2N$ be the total number of rounds (or signals) used, and let $2m < N$ (i.e., $m < N/2$) be the size of the sample used for error testing.  The quantum communication stage operates as follows:

1. Eve creates a quantum state $\ket{\psi}_{AB^1\cdots B^pE}$, where each $A$ and $B^i$ system consists of $2N$ qubits each; the $E$ system is arbitrary.  If the source is honest and there is no noise, the state should consist of $2N$ independent copies of $\ket{\ghz^{p+1}(0;0)}$. The $A$ register is sent to Alice while the $B^i$ register is sent to Bob$^i$.  Alice chooses a permutation and announces it to each Bob; all parties permute their $2N$ systems according to this permutation.  They then divide their $2N$ qubits into a Left half (the first $N$) and a Right half (the last $N$ qubits).

2. Alice chooses a subset $t \subset \{1, \cdots, N\}$ with $|t| = m$, uniformly at random and sends this subset to all other users.  For every qubit $j \in t$, parties will measure their $j$'th Left qubit and their $j$'th right qubit in the $X$ basis.  This results in measurement outcomes $l^0, r^0, l^1,r^1 \cdots, l^p,r^p \in \{0,1\}^m$, where $l^0,r^0$ is Alice's outcome for her left qubit measurement (respectively Right); and similarly $l^j,r^j$ for $j>0$ will be Bob$^j$'s outcome.  Let $q^j = l^j\oplus r^j$.  Parties announce to one another the XOR of their measurement outcomes (namely they announce $q^j$) allowing them to compute $Q_X = w(q^0\oplus\cdots \oplus q^p)$.  Note that, ideally, it should hold that $Q_X = 0$ (see also Lemma \ref{lemma:introduction:ghz-x}).  Note, also, that all classical communication done is over the authenticated classical channel.

3. The remaining $2n = 2N-2m$ signals are measured in the $Z$ basis by all parties.

Ordinarily, the QCKA protocol of \cite{grasselli2018finite} now runs error correction and privacy amplification.  However, before this, we propose that the following CAD process is run:

1. Let $k^0$ be Alice's raw key (the resulting $Z$ basis measurement outcomes in the last step of the above process); similarly let $k^j$ be Bob$^j$'s.  Each raw key may be divided into left and right portions (as with the testing phase above), denoted $k^j=k^j_{L}k^j_{R}$.  For each $i=1, 1, \cdots, N-m$, Alice broadcasts the parity of her left and right raw key bits; namely she broadcasts $p_i = k^0_{L,i}\oplus k^0_{R,i}$.  This is done by sending the resulting parity string to each Bob over the authenticated channel.  Each Bob$^j$ will similarly broadcast the parity of his left and right raw key bits.

2. For each two-bit block $i=1, \cdots, N-m$, if one or more Bob's report a parity outcome that is different from Alice's, that block is discarded by all parties.  Otherwise, if all Bob's have the same parity for the $i$'th block, they will keep the first bit of that block (the Left bit) as their new raw key, discarding the second bit (Right bit).  We use $n_a$ to be the number of accepted blocks (which is also the size of the new raw key).

After the above, error correction and privacy amplification are run as normal.
%The above process results in a new raw key that is smaller, but hopefully more correlated with the leader Alice. After the above CAD process is run, error correction and privacy amplification are run as normal.  %Note that, as proven in \cite{grasselli2018finite}, error correction need only leak an amount of information proportional to the Alice-Bob$^j$ pair with the highest raw key error rate.

%%% Local Variables:
%%% mode: latex
%%% TeX-master: "main"
%%% End:

\section{Security}%\label{sec:security:}

We now prove security of the described QCKA protocol with CAD.  We comment that most, if not all, proofs of security involving some CAD process (even for basic two-party QKD) involve bounding the von Neumann entropy assuming i.i.d. attacks, and then using various results which promote the analysis to coherent attacks in the finite key setting.  However those results tend to have a significant error term, especially for small signal sizes, which must be deducted from the final key rate.  Our proof, instead, bounds directly the quantum min entropy for arbitrary coherent attacks, using as a foundation, the sampling framework of Bouman and Fehr discussed earlier.

First, we must define a suitable sampling set.  Let $N$ be fixed and let $t \subset \{1, \cdots, N\}$ with $|t| = m$ for given $m < N/2$.  Let $p$ be the number of Bob's in the protocol.  We define the set of good words for our strategy to be:
\begin{align}\label{eq:security:good-words}
  \mathcal{G}^t_\delta = \{&(x,y,z,w) \in \bit_p^N\times\{0,1\}^N\times\bit_p^N\times\{0,1\}^N\notag\\
  &\st |w(y_t\oplus w_t) - w(y_{-t}\oplus w_{-t})|\le\delta\}.
\end{align}
Note that the good words do not depend on the ``$x$'' and ``$z$'' portion of the tuples, only the $y$ and $w$ values.  It is not difficult to show using Lemma \ref{lemma:introduction:basic-sample}, that for any $q = (x,y,z,w)$:
\begin{equation}\label{eq:security:fail-pr}
  \epsilon^{cl}_\delta = \max_qPr(q \not \in \mathcal{G}^t_\delta) \le \epsilon_0.
\end{equation}
where $\epsilon_0$ was defined in Lemma \ref{lemma:introduction:basic-sample}.  To prove the above, note that for any $q$, an equivalent bit string $\tilde{q}$ of size $N$, may be constructed such that $q \not \in \mathcal{G}^t_\delta$ will imply $\tilde{q}$ is not in the set defined and analyzed in Lemma \ref{lemma:introduction:basic-sample}, for all subsets $t$.  Namely the sampling strategy we care about can be reduced to the one analyzed in Lemma \ref{lemma:introduction:basic-sample}.

We are now in a position to state and prove our main result:
\begin{theorem}%\label{thm:security:}
  Let $\epsilon > 0$ and let $\ket{\psi}_{AB^1\cdots B^pE}$ be a quantum state distributed by Eve where the $A$ and $B^j$ registers consist of $2N$ qubits each.  Let $\rho_{AE}\up{t,Q_X,n_a}$ be the final state after running the QCKA protocol with CAD conditioned on a particular subset $t$ being chosen, along with a particular observation $Q_X$ and number of accepted (not discarded) blocks $n_a$.  Here, the $E$ register contains all public information (e.g., parity announcements).  Then, it holds that:
  \begin{equation}%\label{eq:security:}
    \Hmin^{\epsilon'}(A|E)_{\rho\up{t,QX,n_a}} \ge n_a\left(1 - h\left[\frac{n}{n_a}(Q_X+\delta)\right]\right)
  \end{equation}
  except for probability at most $\epsilon_{fail}$, where the probability is over all subsets $t$, observations $Q_X$, and values of $n_a$, and where $\epsilon_{fail} = 2\epsilon^{1/3}$.  Above, $n = N-m$, the smoothening parameter is $\epsilon' = 4\epsilon+2\epsilon^{1/3}$, and
%  \begin{equation}%\label{eq:security:}
$    \delta = \sqrt{\frac{(N+2)\ln(2/\epsilon^2)}{mN}}.$
%  \end{equation}
%  Note that the total number of GHZ states distributed in the protocol is $2N$ of which $2m$ were used for the sampling process.
\end{theorem}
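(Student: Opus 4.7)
The plan is to combine the Bouman-Fehr sampling framework (Theorem \ref{thm:introduction:sample}) with the GHZ-entropy argument of Lemma \ref{lemma:introduction:ghz-entropy} and the state-transfer result of Lemma \ref{lemma:introduction:entropy-mixed-cptp}. First, I would expand $\ket{\psi}$ in the tensor-product Left-Right GHZ basis, $\ket{\psi} = \sum \alpha_{x^L,y^L,x^R,y^R} \ket{g(x^L;y^L)}_L \ket{g(x^R;y^R)}_R \ket{E_{...}}$, and apply Theorem \ref{thm:introduction:sample} with the good-word set of Equation \ref{eq:security:good-words}. Using Equation \ref{eq:security:fail-pr} together with Lemma \ref{lemma:introduction:basic-sample}, the choice $\epsilon_0 = \epsilon^2$ recovers the stated $\delta$ and produces an ideal state $\ket{\phi^t}$ in the good subspace whose average trace distance from $\ket{\psi}$ is at most $\sqrt{\epsilon_0} = \epsilon$.

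Working on the ideal state, Lemma \ref{lemma:introduction:ghz-x} guarantees that the sample X-basis announcements equal $y^L_j \oplus y^R_j$ at each $j\in t$, so observing $Q_X$ together with the good-word constraint forces $w((y^L\oplus y^R)_{-t})\le Q_X+\delta$. Expanding each GHZ factor as $\ket{g(x;y)} = \frac{1}{\sqrt 2}(\ket{0,x} + (-1)^y\ket{1,\bar x})$, the Z-basis measurement on $-t$ produces classical outcomes $(a^L_i,a^R_i,x^L_i,x^R_i)$ per block, and the CAD parity-matching rule accepts precisely the $n_a$ blocks with $x^L_i=x^R_i$. For each accepted block, using $a^R_i=a^L_i\oplus p_i$ with $p_i$ the announced common parity converts the joint GHZ amplitude phase via
\[
(-1)^{y^L_i a^L_i+y^R_i a^R_i} = (-1)^{y_i a^L_i}(-1)^{y^R_i p_i},\qquad y_i:=y^L_i\oplus y^R_i.
\]
Grouping by $y_{\mathcal A}$ and rewriting Alice's kept Left register in the Hadamard basis, the joint Alice-Eve pre-measurement state takes the form $\ket{\Phi}_{A_L E}\propto \sum_{y_{\mathcal A}\in J}\ket{y_{\mathcal A}}^X_{A_L}\otimes\ket{\tilde E_{y_{\mathcal A}}}_E$, with the residual $y^L,y^R$ freedom and all rejected-block phases absorbed into $\ket{\tilde E_{y_{\mathcal A}}}$. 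Since $wt(y_{\mathcal A})\le wt((y^L\oplus y^R)_{-t})\le n(Q_X+\delta)$, the set $J$ is a subset of $n_a$-bit strings of Hamming weight at most $n(Q_X+\delta)$, so $|J|\le 2^{n_a h[n(Q_X+\delta)/n_a]}$ by the standard binomial bound whenever $n(Q_X+\delta)\le n_a/2$.

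This is exactly the hypothesis of Lemma \ref{lemma:introduction:ent-super} (the engine behind Lemma \ref{lemma:introduction:ghz-entropy}): the state is supported on $|J|$ vectors of $A_L$ in the X basis, while Alice measures in the Z basis. The corresponding mixed state has Alice's outcomes uniform and independent of Eve, so $\Hmin(A|E)_\sigma = n_a$, and Lemma \ref{lemma:introduction:ent-super} yields $\Hmin(A|E)_{\phi^t} \ge n_a - \log_2|J| \ge n_a(1 - h[n(Q_X+\delta)/n_a])$ on the ideal state, with Equation \ref{eq:introduction:min-mixed-cl} handling the classical conditioning on $(t,Q_X,n_a)$ and the random acceptance set $\mathcal A$. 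Finally, Lemma \ref{lemma:introduction:entropy-mixed-cptp} transports this bound to the real state $\ket{\psi}$: the CPTP map $\mathcal F$ composed of subset choice, X-sample, Z-measurement, parity announcement, and CAD acceptance produces the classical register $(t,Q_X,n_a)$ alongside the residual $AE$ register, and plugging $\epsilon=\sqrt{\epsilon_0}$ yields the smoothing parameter $4\epsilon+2\epsilon^{1/3}$ and failure probability $2\epsilon^{1/3}$ as claimed.

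The main obstacle I anticipate is the amplitude bookkeeping in the second paragraph: I must verify carefully that after summing the $y^L,y^R$ variables and switching to Alice's Hadamard basis, the support-on-$|J|$ structure holds uniformly over the random acceptance set $\mathcal A$, and that the carry-over of the weight constraint from the $n$-bit vector $(y^L\oplus y^R)_{-t}$ to the $n_a$-bit subvector $y_{\mathcal A}$ remains tight even after giving Eve the classical outcomes on rejected blocks.
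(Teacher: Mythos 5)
Your proposal is correct and follows essentially the same route as the paper: Bouman--Fehr sampling in the Left--Right GHZ basis with the good-word set of Equation \ref{eq:security:good-words} and $\epsilon_0=\epsilon^2$, the observation that CAD accepts a block iff $x^L_i=x^R_i$, reduction of Alice's kept qubits to an $X$-basis superposition over a set $J$ contained in a Hamming ball of radius $n(Q_X+\delta)$, the superposition lemma, and Lemma \ref{lemma:introduction:entropy-mixed-cptp} to transfer to the real state. The only difference is bookkeeping: you condition on the announced parities and track the phase $(-1)^{y_i a^L_i}$ by hand before invoking Lemma \ref{lemma:introduction:ent-super} directly, whereas the paper models the parity announcements with explicit CNOT ancillas (delayed measurement), keeps the accepted rounds as $2p+2$-qubit GHZ states $\ket{g(F(x,z,r);y\oplus w)}$, and applies Lemma \ref{lemma:introduction:ghz-entropy} --- which performs the same Hadamard-basis computation you anticipate as the "main obstacle."
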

\begin{proof}%\label{pf:security:}
   Fix a permutation $\pi$ used by Alice and the Bob's in step one of the protocol and let $\ket{\widetilde{\psi}}_{ABE}$ be the resulting state after the parties permute their systems.  Consider the set of good words defined in Equation \ref{eq:security:good-words}; for any $t$, consider the quantum subspace spanned by $\mathcal{G}^t_\delta(GHZ) = \{\ket{g(x;y)}\ket{g(z;w)} \st (x,y,z,w) \in \mathcal{G}^t_\delta\}\otimes\mathcal{H}_E$.  By Theorem \ref{thm:introduction:sample}, there exists ideal states $\ket{\phi^t}$ such that $\ket{\phi^t} \in \mathcal{G}^t_\delta(GHZ)$ and:
\begin{equation}\label{eq:security:trd-ideal}
\frac{1}{2}\trd{\sum_tP_T(t)\kb{t}(\kb{\widetilde{\psi}} - \kb{\phi^t})}\le \epsilon,
\end{equation}
where the last inequality follows from Equation \ref{eq:security:fail-pr} and our choice of $\delta$.  Let's consider tracing the protocol over the ideal state $\sigma_{TABE} = \sum_tP_T(t)\kb{t}_T\kb{\phi^t}_{ABE}$.  Choosing a subset $t$ is equivalent to measuring the $T$ register, causing the state to collapse to a particular good state $\ket{\phi^t}_{ABE}$.  From Lemma \ref{lemma:introduction:ghz-x}, measuring in the $X$ basis, those pairs of GHZ states as discussed above, resulting in outcome $Q_X$, causes the ideal state to collapse to:
\begin{equation}\label{eq:security:ideal-pre-measure}
\sigma\up{t,Q_X} = \sum_{\substack{(b,d) \in \mathcal{I}_{Q_X}\\a,c\in\bit_p^m}}p_{a,b,c,d}\kb{\mu(a,b,c,d)}
\end{equation}
where:
%\begin{equation}%\label{eq:security:}
$\mathcal{I}_{Q_X} = \{(b,d) \in \{0,1\}^m \st w(b\oplus d) = Q_X\}$
%\end{equation}
and $\ket{\mu(a,b,c,d)} = $
\begin{equation}\label{eq:security:mu}
\sum_{\substack{(y,w)\in J(Q_X,b,d)\\x,z\in\bit_p^n}}\ket{g(x;y)}\ket{g(z;w)}\ket{E_{x,y,z,w}^{t,Q_X,a,b,c,d}},
\end{equation}
with $J(Q_X:b,d)=$
%\begin{align*}%\label{eq:security:}
  $\{(y,w) \in \{0,1\}^n \st |w(y\oplus w) - w(b\oplus d)| \le \delta\}  =\{(y,w) \in \{0,1\}^n \st |w(y\oplus w) - Q_X| \le \delta\}.$
%\end{align*}
The above follows from Lemma \ref{lemma:introduction:ghz-x} (for the definition of $\mathcal{I}_{Q_X}$) and from the fact that the state measured lives in the subspace of good words (for the definition of $J(Q_X;b,d)$).  Note that the latter set does not depend on $b,d$ and so we will forgo writing it in the future and simply write $J(Q_X)$.  We will call the GHZ states indexed by $x$ and $y$ the Left states and the other GHZ states the Right states.  

Now, normally, Alice and the Bob's will measure their remaining qubits in the $Z$ basis and run the CAD protocol.  However, instead, we will use  a delayed measurement technique, modeling the CAD process through unitary operations, followed by a measurement after the fact.  This will be mathematically equivalent to the actual protocol.  First, Alice, will create a new $n$ qubit ancilla (called the \emph{parity ancilla}) initialized in the $\ket{0\cdots 0}$ state which will store her parity announcements.  Then, for $j=1, \cdots, n$, Alice will apply a CNOT with the control on her $j$'th Left qubit and the target on the $j$'th parity ancilla, followed by a second CNOT controlled on her $j$'th Right qubit and the target being, again, the $j$'th parity ancilla.  This, effectively, causes the parity ancilla to be the XOR of Alice's two qubits.  Measuring this ancilla later is equivalent to measuring the Left and Right qubits first, then XOR'ing the results classically.  Let's consider a single round $i$, and single state $\ket{g(x_i,y_i)}_L\ket{g(z_i;w_i)}_R$ (where, keep in mind, $x_i$ and $z_i$ are $p$-bit strings).  This operation will evolve this state, and the $i$'th qubit of the new parity ancilla, to:
\begin{align*}%
  &\ket{0}_P\ket{g(x_i,y_i)}_L\ket{g(z_i;w_i)}_R\\
  &\mapsto \frac{1}{2}\ket{0}_P(\ket{0,x_i,0,z_i} + (-1)^{y+w}\ket{1,\bar{x}_i,1,\bar{z}_i})\\
  &+ (-1)^w\frac{1}{2}\ket{1}_P(\ket{0,x_i,1,\bar{z}_i} + (-1)^{y+w}\ket{1,\bar{x}_i,0,z_i})
\end{align*}

At this point, each Bob will similarly perform this operation on his left and right qubits to store the parity in new ancilla registers for each Bob.  This will map the above state to:
%\begin{align*}%
%  &\frac{1}{2}\ket{0}_P\ket{x_i\oplus z_i}(\ket{0,x_i,0,z_i} + (-1)^{y+w}\ket{1,\bar{x}_i,1,\bar{z}_i}) +\\
%  &(-1)^w\frac{1}{2}\ket{1}_P\ket{x_i\oplus \bar{z_i}}(\ket{0,x_i,1,\bar{z}_i} + (-1)^{y+w}\ket{1,\bar{x}_i,0,z_i})
%  &=\frac{1}{\sqrt{2}}\sum_{r}\ket{r}_P(-1)^{w_i\cdot r}\ket{x_i\oplus r^p\oplus z_i}\ket{g(x_ir(r^p\oplus z_i); y_i\oplus w_i)}
%\end{align*}
 $\frac{1}{\sqrt{2}}\sum_{r}\ket{r}_P(-1)^{w_i\cdot r}\ket{x_i\oplus r^p\oplus z_i}\ket{g(x_ir(r^p\oplus z_i); y_i\oplus w_i)}$
where, now the GHZ state above contains $2p+2$ qubits and where $r^p = r\cdots r$ ($p$-times).
Note that, parties will accept this round only if the parity announcements of each Bob matches the parity announcement of Alice.  This is only true if $x_i=z_i$ (i.e., all $p$-bits in $x_i$ match $z_i$).  Of course, the above is applied to all $n$ of the remaining signal rounds on the superposition state in Equation \ref{eq:security:ideal-pre-measure}.  Let's just focus on a single $\ket{\mu(a,b,c,d)}$ for some $a,c$ and some $b,d\in \mathcal{I}_{Q_X}$.  If we can find a min entropy bound in this case, we can use Equation \ref{eq:introduction:min-mixed-cl} to compute the min entropy of the mixed state $\sigma\up{t,Q_X}$.  Tracing the above delayed measurement process yields a new state which may be written in the form:
\begin{align}
  &\frac{1}{\sqrt{2^n}}\sum_{r\in\{0,1\}^n}\ket{r}_P\sum_{x,z\in\bit_p^n}\ket{B(x,z,r)}\otimes\label{eq:security:state-before-measure}\\
  &\sum_{y,w\in J(Q_X)}(-1)^{w\cdot r}\ket{g(F(x,z,r); y\oplus w)}\ket{E_{x,y,z,w}\up{t,Q_X,a,b,c,d}}.\notag
\end{align}
where:
%\begin{equation}%\label{eq:security:}
$B(x,z,r) = x_1\oplus r_1^p\oplus z_1,  \cdots , x_n\oplus r_n^p\oplus z_n,$
%\end{equation}
and
%\begin{equation}%\label{eq:security:}
$F(x,z,r) = x_1r_1(r_1^p\oplus z_1), \cdots, x_nr_n(r_n^p\oplus z_n).$
%\end{equation}
At this point, parties will reject any round $i$ where  $x_i \ne z_i$.  For this, they measure their parity ancillas, and, based on the result, set a string $acc\in\{0,1\}^n$ where if $acc_i = 1$, it holds that $x_i = z_i$.  Following this measurement, parties discard (trace out) all qubits involved in round $i$ where $acc_i = 0$, leaving only those qubits in rounds where $acc_i=1$ (though, importantly for our proof, still unmeasured).  Conditioning on a particular $acc$ string, let $n_a = \num_1(acc)$ (namely, the number of accepted rounds) and $n_r = \num_0(acc)$.  Tracing the execution of this measurement and the subsequent tracing out of registers, we can write the state in the form: $\sum_{r,x,z}p(r,x,z) \kb{r, B(x,z,r)} \otimes$% it is not difficult to show that the state above (Equation \ref{eq:security:state-before-measure}) can now be written in the form (again, conditioning on this $acc$ string):
\begin{align*}%\label{eq:security:}
%  &\sum_{r,x,z}p(r,x,z) \kb{r, B(x,z,r)} \otimes\\
  &\sum_{y_R,w_R\in R}P\left(\sum_{y_A,w_A \in A}\ket{g(F(x,z,r); y_A\oplus w_A)}\ket{F_{x,z,r,y,w}}\right).
\end{align*}
where, above, we have:
%\begin{equation}%\label{eq:security:}
$R = \{y,w \in \{0,1\}^{n_R} \st wt(y\oplus w) \le n(Q_X+\delta),$
%\end{equation}
and:
%\begin{equation}%\label{eq:security:}
$A = \{y,w \in \{0,1\}^{n_a}\st wt(y\oplus w) + wt(y_R\oplus w_R) \le n(Q_X+\delta)\}.$
%\end{equation}
Above, the states $\ket{F_{x,z,r,y,w}}$ may be readily derived from Equation \ref{eq:security:state-before-measure} and are functions of Eve's ancilla.  Their exact derivation, however, is not important.
Also, note that, above, the sum of $x$ and $y$, is over all $x_i=z_i$ when $acc_i=1$ and all $x_i\ne z_i$ when $acc_i = 0$. The value of $p(r,x,z)$ can be derived, however is not important for the remainder of the proof as min entropy will always assume the worst case anyway.  Finally, at this point, Alice will measure her first qubit in the Left system for each round that was accepted to produce her raw key.  Let $A_Z$ be the random variable storing this result.  From the above equation, we can conclude, using Lemma \ref{lemma:introduction:ghz-entropy}, that the min entropy of this measurement, for this particular outcome $acc$ is bounded by $n_a - \log_2|\widetilde{A}|$ (where $\widetilde{A} = \{z\in\{0,1\}^{n_a}\st wt(z) + wt(y_R\oplus w_R) \le n(Q_X+\delta)\}$.  Of course, the size of $\widetilde{A}$ depends on $y_R, w_R$, however we may assume the worst case that $wt(y_R\oplus w_R) = 0$ and so, using the well known bound on the volume of a Hamming ball, we can conclude $\log_2|\widetilde{A}| \le n_ah\left(\frac{n}{n_a}(Q_X+\delta)\right)$.

Of course, the above is the min entropy for a particular $\ket{\mu(a,b,c,d)}$, however it is clear that our bound was independent of the particular choice of $a,b,c,d$.  Thus, we conclude, that, conditioning on a particular $acc$ string, we have:
%\begin{equation*}%\label{eq:security:}
$\Hmin(A_Z|E)_{\sigma\up{t,Q_X,n_a}} \ge n_a(1-h(\frac{n}{n_a}(Q_X+\delta))).$
%\end{equation*}
Lemma \ref{lemma:introduction:entropy-mixed-cptp}, combined with Equation \ref{eq:security:trd-ideal}, completes the proof for this particular initial permutation (taking the random variable $X$ to be the choice of $t$, and the observation of $Q_X$ and $acc$).  However, since the above did not depend on the permutation (though it will affect the probability of observing a particular $Q_X$ for instance), the result follows for any permutation.  %(In particular, this is a worst-case result; a more fine-grained analysis of the permutation may lead to better results as Eve may be more uncertain in some cases potentially - we leave that as future work.)
\end{proof}

From this, a key-size bound may be found using Equation \ref{eq:introduction:pa}.  If we set $\ell$ to:
\begin{equation}%\label{eq:security:}
\ell = n_a\left(1 - h\left[\frac{n}{n_a}(Q_x+\delta)\right]\right) - \leakEC - 2\log_2\frac{1}{\epsilon},
\end{equation}
then we will have a secret key that is $\epsilon_{PA}$-secure, with $\epsilon_{PA} = 9\epsilon + 2\epsilon^{1/3}$, except with probability $\epsilon_{fail} = 2\epsilon^{1/3}$.  Note that the key-rate, then, will be $\ell/(2N)$ since $2N$ signals are required to establish the above key.

%%% Local Variables:
%%% mode: latex
%%% TeX-master: "main"
%%% End:

% \section{Evaluation}%\label{sec:evaluation:}
\textbf{Evaluation: }
To evaluate our key-rate bound, we will assume a channel that acts independently for all rounds, causing an $X$ basis error in a single round with probability $Q$.  From this, we have $Q_X = 2Q(1-Q)$.  This is because $Q_X$ is measuring the relative number of single errors in two-bit blocks.  We will assume the $Z$ basis error between Alice and Bob$^j$ is $QZ_j$ which may be different from $Q$ and even may be different for the different Bobs.   We can compute the expected value of $n_a$ to be:
$\frac{n_a}{n} = p_a = (QZ_1^2 + (1-QZ_1)^2)\times\cdots \times(QZ_p^2 + (1-QZ_p)^2).$
% The above is due to the fact that, to accept a particular block of two bits, all Bob's must have the same parity as Alice; this occurs if both of Bob$^j$'s outcomes are identical or both are flipped.  Note that our proof can handle asymmetry in the Bob error rates.
In our evaluations, we set $\epsilon = 10^{-36}$ (so that $\epsilon_{PA}$ and $\epsilon_{fail}$ are on the order of $10^{-12}$).  We optimize over $m$ in our evaluations.   Note that the expected raw key error between Alice and Bob$^j$ is $QZ_j^2/p_a$.  Using results from \cite{grasselli2018finite} we can bound  $\leakEC = n_a\max_jh(QZ_j^2/p_a) +\log\frac{2p}{\epsilon}$.

We compare with finite key-results of this protocol without CAD from \cite{grasselli2018finite}.  These results and comparisons are shown in Figures \ref{fig:evaluation:fig1} and \ref{fig:evaluation:fig2}.  Note that the ``Two Party'' results are evaluating standard BB84 with CAD using our result.  We note that our result performs worse than prior work without CAD whenever the noise is symmetric (except in the two-party case).  However, when the noise is asymmetric (with some Bob's having less noise than others), our result shows higher key-rates are possible.
%The reason for the poor performance in the symmetric case can be due to this particular CAD protocol, or our proof method.  Future work should investigate this to see if CAD can help under symmetric noise.
Note that asymmetries in the $Z$ basis noise can occur in large-scale networked implementations of this QCKA protocol \cite{oslovich2024efficient}, and so our results may be applicable in those scenarios.

\begin{figure}
  \centering
  \includegraphics[width=.47\linewidth]{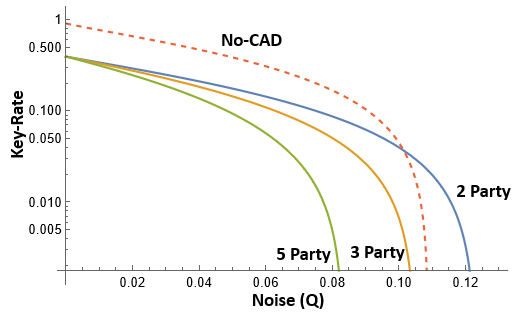}
  \includegraphics[width=.47\linewidth]{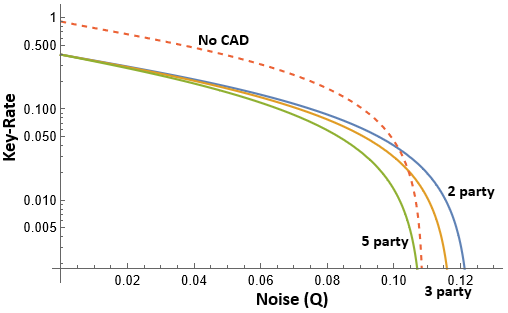}
  \caption{Showing the key-rate of our proposed CAD protocol (solid lines) and comparing with results in \cite{grasselli2018finite} for $10^7$ signals. Left: Here the noise is symmetric in that $QZ^j = Q$ for all $j$.  We note that our CAD key-rate is lower than without CAD whenever there are three or more parties.  Right: Here, the noise is asymmetric. In particular, $QZ^1 = Q$ while all other Bob's $j>1$ are $QZ^j = \frac{1}{4}Q$.  Here we see CAD can outperform standard QCKA protocol rates.  Similar results appear in other asymmetric cases.}\label{fig:evaluation:fig1}
\end{figure}
%\begin{figure}
%  \centering
%  \includegraphics[width=1.0\linewidth]{key-rate-noise-asym.png}
%  \caption{Key-rate bound with CAD when the noise is asymmetric. In particular, $QZ^1 = Q$ while all other Bob's are $QZ^j = \frac{1}{4}Q$.  Here we see CAD can outperform standard QCKA protocol rates.}\label{fig:evaluation:}
%\end{figure}
\begin{figure}
  \centering
  \includegraphics[width=0.75\linewidth]{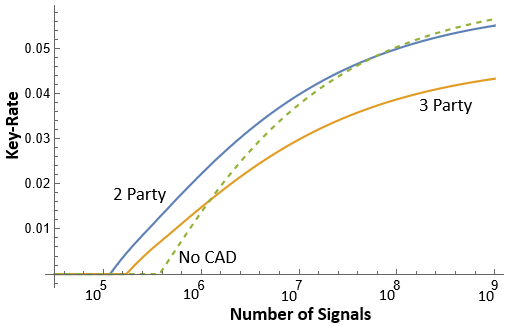}
  \caption{Showing the key-rate versus number of signals for QCKA with CAD (solid) versus without CAD (dashed).  Here, $Q = 10\%$ and we set $QZ^1 = Q$ and all other Bob's to $QZ^j = \frac{1}{4}Q$.  Note, our result can provide positive key-rates for a lower number of signals, than prior work without CAD.}\label{fig:evaluation:fig2}
\end{figure}

%%% Local Variables:
%%% mode: LaTeX
%%% TeX-master: "main-arxiv"
%%% End:

\section{Closing Remarks}%\label{sec:main:}
In this paper, we proposed the addition of CAD to a QCKA protocol introduced in \cite{grasselli2018finite}.
%We derived a novel proof of security for this in the finite key setting, bounding, directly, the quantum min entropy for arbitrary attacks.
%Our method also allowed for asymmetry in the noise between parties.
We evaluated our results and showed that CAD can be beneficial in some scenarios.  In other scenarios, our results are worse. This may be an artifact of our proof technique or it may be due to this particular CAD protocol.  Future work should investigate this further and explore alternative CAD protocols.  %We showed that classical post processing of this nature has the potential to benefit QCKA protocols and our proof techniques may allow future researchers to investigate this area further.
\textbf{Acknowledgments:} WOK would like to acknowledge support from NSF CCF-2143644.

\balance

\end{document}